\newtheorem{theorem}{Theorem}[section]
\newtheorem{lemma}[theorem]{Lemma}
\newtheorem{remarks}[theorem]{Remarks}
\newcommand{\R}{{\mathord{\mathbb R}}}
\begin{document}
\title[Indirect Coulomb Energy]{A new estimate on the indirect Coulomb Energy}

\author[Benguria]{Rafael D. Benguria$^1$}

\thanks{R.\ B.\ was supported in part by
the Iniciativa Cient'fica Milenio, ICM (CHILE) project P07--027-F. }

\author[Bley]{Gonzalo Bley$^2$}

\thanks{G.\ B.\ was supported in part by
the Iniciativa Cient'fica Milenio, ICM (CHILE) project P07--027-F. }

\author[Loss]{Michael Loss$^3$}

\thanks{M.\ L.\ was supported in part by NSF grant DMS-0901304.}

\address{$^1$ Departmento de F\'\i sica, P. Universidad Cat\'olica de Chile, Casilla 306, Santiago 22, Chile,}
\email{\href{mailto: rbenguri@fis.puc.cl}{ rbenguri@fis.puc.cl}}

\address{$^2$ Departmento de F\'\i sica, P. Universidad Cat\'olica de Chile, Casilla 306, Santiago 22, Chile,}
\email{\href{mailto: gabley@uc.cl}{ gabley@uc.cl}}

\address{$^3$ Georgia Institute of Technology, School of Mathematics,
Atlanta, Georgia 30332-0160, }
\email{\href{mailto:loss@math.gatech.edu}{loss@math.gatech.edu}}

\maketitle

\begin{abstract}
Here we prove  a new lower bound on the indirect Coulomb energy in quantum mechanics, in terms of the single particle density of the system. The new universal lower bound is an alternative to the  classical Lieb--Oxford bound (with a smaller constant, $C=1.45 < C_{LO}=1.68$) but involving an additive kinetic energy term  of the single particle density as well. 
\end{abstract}

\medskip {\sl .}
\date{today}

\section{Introduction}
Consider a system of $N$ particles  with charges $e_1, \ldots, e_N \geq 0$. In non relativistic quantum mechanics, this system is described by a normalized wavefunction,
\begin{equation}
\psi(x_1, \ldots, x_N; \sigma_1, \ldots, \sigma_N),
\label{eq:e1}
\end{equation}
where $x_1, \ldots, x_N$ denote the coordinates of the particles ($x_i \in \mathbb{R}^3$), and $\sigma_i$ 
represent possible discrete quantum numbers, such as spin. The corresponding probability density function is given by,
\begin{equation}
f(x_1, \ldots, x_N) = \sum_{\sigma_1, \ldots, \sigma_N} \left|\psi(x_1, \ldots, x_N; \sigma_1, \ldots, \sigma_N)\right|^2.
\label{eq:e2}
\end{equation}
We define the charge density of particle $i$ as,
\begin{equation}
\rho_i(x) = e_i\int f(x_1, \ldots, x_{i - 1}, x, x_{i + 1}, \ldots, x_N) \widehat{{dx}_i},
\label{eq:e3}
\end{equation}
where $\widehat{{dx}_i}$ means integration in all the particle coordinates, except the $i$-th. 
Finally, we then define the single particle 
density as,
\begin{equation}
\rho(x) = \sum_{i = 1}^N \rho_i (x),
\label{eq:e4}
\end{equation}
which is a charge density for the whole system. Here, we assume a Coulomb interaction between the particles given as usual by
\begin{equation}
\sum_{1 \le i < j \le N} \frac{e_i \, e_j}{|x_i-x_j|}.
\label{eq:e5}
\end{equation}
The expectation value of this interaction when the system is described by the wavefunction $\psi$ can be 
simply expressed as,
\begin{equation}
\sum_{i < j}\int \frac{f(X)}{\left|x_i - x_j\right|} dX,
\label{eq:e5b}
\end{equation}
where $X = (x_1, \ldots, x_N)$ and $dX = dx_1 \ldots dx_N$. This expression can be decomposed into 
two parts,
\begin{equation}
\frac{1}{2} \int\frac{\rho(x)\rho(y)}{\left|x - y\right|} \, dx \, dy + E,
\label{eq:e6}
\end{equation}
where the first term, which it is usually called the direct part of the Coulomb energy, represents a classical expression for the electrostatic energy of a charge 
distribution $\rho$. On the other hand, the remainder $E$ is all that has been missed by treating the system as such a distribution $\rho$, and it is known as the indirect part. In this article we will be interested in finding a lower bound for $E$.

\bigskip
In 1930, Dirac \cite{Di30} gave the first approximation for the indirect Coulomb energy in terms of the single particle density. In the standard case, when $e_i=e$, the absolute value of the charge of the electron (for all $i$), using an argument with  plane waves, he approximated $E$ by 
\begin{equation}
E \approx -c_D e^{2/3} \int \rho^{4/3} \, dx,
\label{eq:dirac}
\end{equation}
where $c_D=(3/4)(3/\pi)^{1/3} \approx  0.7386$ (see, e.g., \cite{Mo06}, p. 299).

 The first rigorous lower bound for $E$ was obtained by E.H. Lieb in 1979 \cite{Li79}, using the Hardy--Littlewood Maximal Function \cite{StWe71}. 
There he found that,
\begin{equation}
E \geq -8.52\left\{\int\left[\sum_{i = 1}^N e_i^{2/3}\rho_i(x)\right]^{4/3} dx\right\}^{3/4}\left[\int \rho(x)^{4/3} dx\right]^{1/4}.
\label{eq:e7}
\end{equation}
The constant 8.52 was substantially improved by E.H. Lieb and S. Oxford in 1981 \cite{LiOx81}, although the form 
of the  lower bound was slightly different,
\begin{equation}
E \geq -1.68\left\{\int\left[\sum_{i = 1}^N e_i \rho_i(x)\right]^{4/3} dx\right\}^{1/2} \left[\int \rho(x)^{4/3} dx\right]^{1/2}.
\label{eq:e8}
\end{equation}
However, in the more standard case, when $e_1 = \ldots = e_N = e$, the two inequalities are reduced to a similar form,
\begin{equation}
E \geq -Ce^{2/3}\int \rho(x)^{4/3} \, dx, 
\label{eq:e9}
\end{equation}
where $C$ is a positive constant. The best value for $C$ is unknown, but Lieb and Oxford \cite{LiOx81} 
proved that it is larger or equal than $1.234$. 
The constant they obtained in general, $1.68$, was found by first assigning to each particle a spherically symmetric charge distribution; however, they were not able to obtain the distribution that completely minimized the constant. Then Chan and Handy, in 1999 \cite{ChHa99}, found a better 
value, 1.636, by optimizing numerically this distribution of charge. It is this last constant, as far as we know, that is the smallest value for $C$ that has been found up--to--date. During the last thirty years, after the work of Lieb and Oxford \cite{LiOx81},  there has been a special interest in quantum chemistry in constructing corrections to the Lieb--Oxford term involving the gradient of the single particle density. This interest arises with the expectation that states with a relatively small kinetic energy have a smaller indirect part (see, e.g., \cite{LePe93,PeBuEr96,VeMeTr09} and references therein).

\bigskip
Here we will provide a lower bound that has a  smaller value for $C$ than any of the previously mentioned results, but at the price of obtaining another term that involves a gradient of a power of the single particle density. Our main result is the following:

\begin{theorem}\label{teorema}
 For any normalized wave function $\psi(x_1, \dots, x_N)$ and any $0 < \alpha < 1$ we have the estimate
\begin{eqnarray}\label{exch}
&  E(\psi) \ge -  {1.4508}{(1-\alpha)^{-1/3}}e^{2/3}\int_{\R^3} \rho(x)^{4/3} dx \nonumber \\
& - 0.2382(1-\alpha) ^{-1/6} e^{5/6}\left((\sqrt \rho, |p| \sqrt \rho)\right)^{1/2}  
\left(\int_{\R^3} \rho(x)^{4/3} dx \right)^{1/2} - e(1.1781/\alpha) 
(\sqrt \rho, |p| \sqrt \rho)
\label{eq:Main}
\end{eqnarray}
where
\begin{equation}
(\sqrt \rho, |p| \sqrt \rho) :=  \int_{\R^3} |\widehat{\sqrt \rho}(k)|^2 |2\pi k| d k = \frac{1}{2\pi^2} \int_{\R^3} \int_{\R^3} \frac{|\sqrt{\rho(x)} - \sqrt{\rho(y)}|^2 }{|x-y|^4} dx dy  \ ,
\label{eq:KE}
\end{equation}
where $\widehat f(k)$ denotes the Fourier-transform
$$
\widehat f (k) = \int_{\R^3} e^{-2\pi i k \cdot x} f(x) d x\ .
$$
\end{theorem}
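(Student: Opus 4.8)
The plan is to adapt the Lieb--Oxford smearing technique to this setting. To each particle $i$ at position $x_i$ I assign a normalized, spherically symmetric charge cloud $e_i\mu_{x_i}$, where $\mu_{x_i}(u)=R(x_i)^{-3}\mu\big((u-x_i)/R(x_i)\big)$ for a fixed profile $\mu\ge 0$ with $\int\mu=1$ and an $x$-dependent radius $R(\cdot)$ to be chosen. By Newton's theorem the mutual energy of two such clouds is dominated by the point interaction, so $1/|x_i-x_j|\ge\iint\mu_{x_i}(u)\mu_{x_j}(v)|u-v|^{-1}\,du\,dv$. Setting $\tilde\rho_X=\sum_i e_i\mu_{x_i}$, which carries the same total charge as $\rho$, the positivity of the Coulomb energy $D(\tilde\rho_X-\rho,\tilde\rho_X-\rho)\ge 0$ turns the sum of smeared interactions into a lower bound. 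Taking the expectation $\langle\,\cdot\,\rangle=\int f(X)(\cdot)\,dX$ and using $\langle\tilde\rho_X\rangle(u)=\int \mu_x(u)\rho(x)\,dx=:\rho_R(u)$ together with $\langle\iint\tilde\rho_X(u)\rho(v)|u-v|^{-1}\rangle=\iint\rho(x)\rho(v)W_x(v)$, where $W_x(v)=\int\mu_x(u)|u-v|^{-1}du\le|x-v|^{-1}$ is the smeared potential, I expect the reduction
\begin{equation*}
E\ge -T-S,\qquad T=\iint \rho(x)\rho(y)\,h_x(y)\,dx\,dy,\qquad h_x(y)=\frac{1}{|x-y|}-W_x(y)\ge 0,
\end{equation*}
where $h_x$ is supported in $\{|x-y|\le R(x)\}$ by Newton's theorem, and $S=\langle\sum_i e_i^2 D(\mu_{x_i},\mu_{x_i})\rangle$ is the total self-energy of the clouds.

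Next I would choose $R(x)=\lambda\,\rho(x)^{-1/3}$ (up to powers of $e$ fixed by homogeneity). By the scaling $\int h_x=c_0 R(x)^2$ and $D(\mu_{x},\mu_{x})=d_0 R(x)^{-1}$, with $c_0=\int h_1$, $d_0$ depending only on $\mu$, the self-energy becomes $S=d_0\lambda^{-1}\,\text{const}\cdot e^{2/3}\int\rho^{4/3}$, producing one of the two $\int\rho^{4/3}$ contributions. For the main term I split, using $\rho(y)-\rho(x)=(\sqrt{\rho(y)}-\sqrt{\rho(x)})(\sqrt{\rho(y)}+\sqrt{\rho(x)})$,
\begin{equation*}
T=\underbrace{\int\rho(x)^2\Big(\int h_x(y)\,dy\Big)dx}_{T_1}+\underbrace{\iint\rho(x)\big(\sqrt{\rho(y)}-\sqrt{\rho(x)}\big)\big(\sqrt{\rho(y)}+\sqrt{\rho(x)}\big)h_x(y)\,dx\,dy}_{T_2},
\end{equation*}
where $T_1=c_0\lambda^2\int\rho^{4/3}$ gives the second $\int\rho^{4/3}$ contribution.

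The genuinely new step is the estimate of the off-diagonal part $T_2$, which is where the kinetic term must appear. I would apply Cauchy--Schwarz pairing $(\sqrt{\rho(y)}-\sqrt{\rho(x)})/|x-y|^2$ against the remainder, so that the first factor is \emph{exactly} $\big(2\pi^2(\sqrt\rho,|p|\sqrt\rho)\big)^{1/2}$ by the representation \eqref{eq:KE}. The second factor, $\iint\rho(x)^2(\sqrt{\rho(y)}+\sqrt{\rho(x)})^2|x-y|^4 h_x(y)^2\,dx\,dy$, is controlled using $h_x\le|x-y|^{-1}$ and the support bound $|x-y|\le R(x)$; by the scaling above it is of order $\lambda^{5}\int\rho^{4/3}$ after writing $(\sqrt{\rho(y)}+\sqrt{\rho(x)})^2\le 2\big(4\rho(x)+(\sqrt{\rho(y)}-\sqrt{\rho(x)})^2\big)$. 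The first piece yields the cross term $\propto\lambda^{5/2}\big(\int\rho^{4/3}\big)^{1/2}(\sqrt\rho,|p|\sqrt\rho)^{1/2}$, while the retained non-local remainder feeds back a pure $(\sqrt\rho,|p|\sqrt\rho)$ term. The parameter $\alpha\in(0,1)$ enters through Young's inequality when separating these two contributions (equivalently, splitting the gradient term into a fraction absorbed into the local estimate and a fraction $\propto\alpha^{-1}$ left as the pure kinetic term); jointly optimizing over $\lambda$ and over $\alpha$ then produces the three coefficients, the two $\int\rho^{4/3}$ pieces combining after the $\lambda$-minimization into the constant $1.4508$.

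The hard part will be the off-diagonal estimate of $T_2$: one must reproduce precisely the kernel $|x-y|^{-4}$, i.e.\ the operator $|p|$, rather than a larger negative power, and with a constant good enough to beat $C_{LO}$. This forces a careful, essentially optimal choice of the smearing profile $\mu$ (as in the Chan--Handy numerical optimization, but now entangled with the gradient term) and a careful treatment of the asymmetry introduced by the $x$-dependent radius $R(x)$, since $h_x(y)$ is not symmetric in $x,y$. The remaining work is the bookkeeping of the joint optimization in $(\lambda,\alpha)$ and the profile-dependent constants $c_0,d_0$, which I expect to be routine once the structural inequality $E\ge -T-S$ and the Cauchy--Schwarz split producing \eqref{eq:KE} are in place.
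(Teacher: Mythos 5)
Your proposal is correct in outline and follows essentially the same route as the paper: your structural inequality $E \ge -T-S$ is precisely Onsager's lemma (quoted in the paper from Lieb--Seiringer) with the Lieb--Oxford choice of clouds scaled by $\rho^{-1/3}$, and the square-root splitting of $T$, the Cauchy--Schwarz pairing against $(\sqrt{\rho(y)}-\sqrt{\rho(x)})/|x-y|^2$ to produce exactly the $H^{1/2}$ term, the Young-inequality split with parameter $\alpha$, and the final optimization over $\lambda$ together with the uniform-ball profile are all the paper's steps. The only difference is technical bookkeeping: where you bound $(\sqrt{\rho(y)}+\sqrt{\rho(x)})^2 \le 2\bigl(4\rho(x)+(\sqrt{\rho(y)}-\sqrt{\rho(x)})^2\bigr)$ to extract the pure kinetic term directly (which yields fixed constants), the paper instead splits the off-diagonal term into a $\rho(y)^{1/2}$ piece and a $\rho(x)^{1/2}$ piece, bounds the first self-referentially by $F^{1/2}$, and only then applies Young's inequality with parameter $\alpha$ and absorbs $\alpha F$ into the left side --- which is what produces the stated $\alpha$-dependent constants $1.4508(1-\alpha)^{-1/3}$, $0.2382(1-\alpha)^{-1/6}$, and $1.1781/\alpha$.
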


\begin{remarks}

i) For many physical states, the contribution of the last two terms in (\ref{eq:Main}) is small compared with the contribution of the first term. We illustrate this fact in the Appendix; 

ii) For the second equality in (\ref{eq:KE})  see, e.g., \cite{LiLo01}, Section 7.12, equation (4), p. 184;

iii) It was already noticed by Lieb and Oxford (see the remark after equation (26), p. 261 on \cite{LiOx81}), that somehow for uniform densities the Lieb--Oxford constant should be $1.45$ instead of $1.68$; 

iv) In the same vein, J. P. Perdew \cite{Pe91}, by employing results for a uniform electron gas in its low density limit showed that in the Lieb--Oxford bound one ought to have $C \ge 1.43$ (see also, \cite{LePe93}).

\end{remarks}

\section{Proof of (\ref{exch})}

In this section we give the proof of the new lower bound on $E$. Through out this section we set  $e=1$. 
The proof of this theorem hinges on the following lemma, originally due to L. Onsager \cite{On39}. We quote it from \cite{LiSe09} for the case where
all the charges are equal.
\begin{lemma}[Onsager's lemma]
Consider $N$ unit point charges located at the distinct points $x_1, \dots, x_N$. For each $1 \le i \le N$ let $\mu_{x_i}$ be a nonnegative, bounded function
that is radially symmetric about he point $x_i$ and whose integral $\int \mu_{x_i}(x) dx = 1$. Then for any non-negative integrable function $\rho$ we have the inequality
\begin{equation}
\sum_{i < j} \frac{1}{|x_i-x_j|} \ge -D(\rho, \rho) + 2 \sum_{i=1}^N D(\rho, \mu_{x_i}) - \sum_{i=1}^N D(\mu_{x_i}, \mu_{x_i}) \ .
\end{equation}
\end{lemma}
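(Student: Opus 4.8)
The plan is to prove Onsager's lemma by recognizing the right-hand side as the Coulomb self-energy of a signed charge distribution, which is manifestly non-negative, and then reading off the stated inequality. The key object is the difference between the point charges and their smeared-out versions. Specifically, I would consider the measure
\[
\nu = \rho - \sum_{i=1}^N \mu_{x_i},
\]
and compute its Coulomb energy $D(\nu,\nu)$, where $D(f,g) = \tfrac12 \int\!\!\int \frac{f(x)g(y)}{|x-y|}\,dx\,dy$ is the standard (positive-definite) Coulomb inner product. Expanding bilinearly gives
\[
D(\nu,\nu) = D(\rho,\rho) - 2\sum_{i=1}^N D(\rho,\mu_{x_i}) + \sum_{i<j} 2 D(\mu_{x_i},\mu_{x_j}) + \sum_{i=1}^N D(\mu_{x_i},\mu_{x_i}).
\]
Since the Coulomb kernel $1/|x-y|$ is positive-definite (its Fourier transform is a positive multiple of $|k|^{-2}$), we have $D(\nu,\nu) \ge 0$, and rearranging yields a lower bound for the cross terms $\sum_{i<j} D(\mu_{x_i},\mu_{x_j})$ in terms of the remaining pieces.

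The crucial step that converts this into the point-charge interaction is \emph{Newton's theorem}: because each $\mu_{x_i}$ is radially symmetric about $x_i$ with total mass one, its Coulomb potential outside its support coincides exactly with that of a unit point charge at $x_i$. Consequently, for the distinct points $x_1,\dots,x_N$, provided the smeared charges are arranged so their mutual interactions reduce to the point interaction, one obtains
\[
D(\mu_{x_i},\mu_{x_j}) \le \tfrac12 \frac{1}{|x_i - x_j|}
\]
for $i \neq j$; in fact Newton's theorem gives that the interaction energy of two radially symmetric unit charges is at most the point-charge value, with equality when their supports are disjoint. Substituting this into the rearranged inequality $\sum_{i<j} 2D(\mu_{x_i},\mu_{x_j}) \ge -D(\rho,\rho) + 2\sum_i D(\rho,\mu_{x_i}) - \sum_i D(\mu_{x_i},\mu_{x_i})$ and then bounding the left side above by $\sum_{i<j}\frac{1}{|x_i-x_j|}$ produces the claimed inequality after absorbing the factors of $2$ into the definition of $D$.

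I expect the \textbf{main obstacle} to be the careful handling of Newton's theorem at the level of the self-energy bookkeeping: one must verify that replacing the smeared cross-energies $D(\mu_{x_i},\mu_{x_j})$ by the point-charge value $\tfrac{1}{2}|x_i-x_j|^{-1}$ goes in the correct direction (an upper bound on the cross-energy is what is needed, so that bounding it from below via positivity of $D(\nu,\nu)$ still yields a valid lower bound on the point-charge sum). The sign tracking through the bilinear expansion is where an error is most likely to creep in, so I would write out the expansion of $D(\nu,\nu)$ explicitly and match coefficients against the stated inequality term by term, paying close attention to the factor-of-two conventions in $D(\cdot,\cdot)$ versus the bare kernel $1/|x_i-x_j|$. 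The positive-definiteness of the Coulomb kernel and Newton's theorem are both classical, so once the bookkeeping is correct the result follows immediately.
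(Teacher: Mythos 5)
Your proposal is correct and is essentially the standard proof of Onsager's lemma that the paper itself defers to (it cites Lieb--Seiringer's book rather than giving a proof): positivity of the Coulomb self-energy $D(\nu,\nu)\ge 0$ of the signed distribution $\nu=\rho-\sum_i\mu_{x_i}$, bilinear expansion, and then Newton's theorem to bound the smeared cross-energies $2D(\mu_{x_i},\mu_{x_j})\le |x_i-x_j|^{-1}$, which indeed goes in the right direction since those cross-terms are bounded below by the right-hand side and above by the point-charge sum. The only superfluous element is your hedge about the charges being ``arranged'' suitably: the bound $2D(\mu_{x_i},\mu_{x_j})\le |x_i-x_j|^{-1}$ holds for any radially symmetric unit charges by applying Newton's theorem twice, with no assumption on their supports.
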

For the simple proof we refer the reader to \cite{LiSe09}.
As in \cite{LiOx81} for each $1 \le i \le N$ we set
\begin{equation}
\mu_{x_i}(x) = \lambda^3 \rho_\psi(x_i) \mu(\lambda \rho(x_i)^{1/3}(x-x_i))
\end{equation}
where $\mu$ is a non-negative, bounded, radial function that integrates to one.
Applying Onsager's Lemma with $\rho = \rho_\psi$ to the indirect term 
\begin{equation}
E( \psi) = \langle \psi \sum_{i < j} \frac{1}{|x_i-x_j|} \psi \rangle - D(\rho_\psi , \rho_\psi)
\end{equation}
one finds after simple calculations
\begin{equation} \label{lowerbound}
 E(\psi) \ge - \int_{\R^3 \times \R^3} dy dx  \rho(y) \lambda \rho(x)^{4/3} R(\lambda \rho(x)^{1/3}(x-y)) - 
\lambda D(\mu,\mu) \int_{\R^3} \rho(x)^{4/3} dx
\end{equation}
Here
\begin{equation}
R(t) = \frac{1}{t} - \phi(t)
\end{equation}
where
\begin{equation}
\phi(t) = \int_{\R^3} \min \left(\frac{1}{t}, \frac{1}{|y|}\right) \mu(y) dy \ .
\end{equation}
We set
$$
F =  \int_{\R^3 \times \R^3} dy dx  \rho(y) \lambda \rho(x)^{4/3} R(\lambda \rho(x)^{1/3}(x-y)) ,
$$
$$
K =  \int_{\R^3 \times \R^3} dy dx \frac{  [\rho(y)^{1/2} - \rho(x)^{1/2}] ^2}{|x-y|^4} , 
$$
$$
L = \int_{\R^3} \rho(x)^{4/3} dx,
$$
$$
M_1 =   \frac{2 \pi }{3}\int_{|z|<1} |z|^2 \mu(z) dz,
$$
and
$$
M_2 = \left(\int_{\R^3} |z|^4 R(z)^2 d z\right)^{1/2} \ .
$$
With this notation we have:
\begin{lemma}
\begin{eqnarray}
 F \le \frac{M_1}{\lambda^2} L+ \frac{1}{\lambda^{3/2}} K^{1/2} \left[F^{1/2} +\frac{M_2}{\lambda} L^{1/2}\right] \ .
\end{eqnarray}
\end{lemma}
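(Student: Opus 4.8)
The plan is to split $F$ into a main term, obtained by freezing $\rho(y)$ to $\rho(x)$, plus a remainder controlled by $K$. Writing $\rho(y)=\rho(x)+(\rho(y)-\rho(x))$ inside the integral gives $F=F_0+G$ with
\[
F_0=\int_{\R^3\times\R^3}dx\,dy\,\rho(x)\,\lambda\rho(x)^{4/3}R(\lambda\rho(x)^{1/3}(x-y)),
\]
\[
G=\int_{\R^3\times\R^3}dx\,dy\,(\rho(y)-\rho(x))\,\lambda\rho(x)^{4/3}R(\lambda\rho(x)^{1/3}(x-y)).
\]
In $F_0$ the inner integral over $y$ is evaluated by the substitution $z=\lambda\rho(x)^{1/3}(x-y)$, whose Jacobian $dy=\lambda^{-3}\rho(x)^{-1}dz$ removes all $x$-dependence and yields $F_0=\lambda^{-2}L\int_{\R^3}R(z)\,dz$. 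The identity $\int_{\R^3}R(z)\,dz=M_1$ is where the structure of $R$ is used: since $\phi$ is the Coulomb potential of $\mu$, one has $-\Delta R=4\pi(\delta_0-\mu)$, and because $\delta_0-\mu$ has total charge zero and (by radial symmetry) vanishing first moment, a second-order expansion of its Fourier transform near the origin gives $\int_{\R^3}R\,dz=\tfrac{2\pi}{3}\int_{\R^3}|z|^2\mu(z)\,dz$; as $\mu$ is supported in the unit ball this equals $M_1$, so $F_0=\frac{M_1}{\lambda^2}L$.

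To estimate $G$ I would factor $\rho(y)-\rho(x)=(\sqrt{\rho(y)}-\sqrt{\rho(x)})(\sqrt{\rho(y)}+\sqrt{\rho(x)})$ and peel off the kernel $|x-y|^{-2}$ so as to reproduce $K$. Using $R\ge 0$ and the Cauchy--Schwarz inequality,
\[
G\le K^{1/2}\left(\int_{\R^3\times\R^3}dx\,dy\,|x-y|^4(\sqrt{\rho(x)}+\sqrt{\rho(y)})^2\lambda^2\rho(x)^{8/3}R(\lambda\rho(x)^{1/3}(x-y))^2\right)^{1/2}.
\]
Applying Minkowski's inequality to the $L^2$ norm of $\sqrt{\rho(x)}+\sqrt{\rho(y)}$ with respect to the measure $|x-y|^4\lambda^2\rho(x)^{8/3}R^2\,dx\,dy$ splits the second factor into a $\sqrt{\rho(x)}$-part and a $\sqrt{\rho(y)}$-part. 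The first part is again computed by the substitution $z$, giving exactly $M_2^2\lambda^{-5}L$, hence a contribution $\frac{M_2}{\lambda^{5/2}}L^{1/2}$ after factoring out $K^{1/2}$.

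The $\sqrt{\rho(y)}$-part is the crux, and I expect it to be the main obstacle. Here I would establish the pointwise domination
\[
|x-y|^4\lambda^2\rho(x)^{8/3}R(\lambda\rho(x)^{1/3}(x-y))^2\le\lambda^{-2}\rho(x)^{4/3}R(\lambda\rho(x)^{1/3}(x-y)),
\]
which, upon dividing by $\lambda^{-2}\rho(x)^{4/3}R$ and recalling $z=\lambda\rho(x)^{1/3}(x-y)$, reduces to the clean scalar estimate $|z|^4R(|z|)\le 1$. This holds because $R(t)\le 1/t$ forces $t^4R(t)\le t^3\le 1$ for $t\le 1$, while $R(t)=0$ for $t\ge 1$ since $\mu$ is supported in the unit ball. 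Integrating against the common nonnegative weight $\rho(y)$ then bounds the $\sqrt{\rho(y)}$-part by $\lambda^{-3}F$, so its square root is at most $\lambda^{-3/2}F^{1/2}$. Collecting the two parts gives $G\le\lambda^{-3/2}K^{1/2}\bigl(F^{1/2}+\tfrac{M_2}{\lambda}L^{1/2}\bigr)$, and adding $F_0=\frac{M_1}{\lambda^2}L$ produces the stated implicit inequality for $F$ (which one later solves as a quadratic in $F^{1/2}$).
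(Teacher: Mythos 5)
Your proof is correct and follows essentially the same route as the paper: the same decomposition $\rho(y)=\rho(x)+(\sqrt{\rho(y)}-\sqrt{\rho(x)})(\sqrt{\rho(y)}+\sqrt{\rho(x)})$, Cauchy--Schwarz against the kernel $|x-y|^{-2}$ to produce $K^{1/2}$, the pointwise bound $|z|^4R(|z|)^2\le R(|z|)$ (from $R(t)\le 1/t$ and $R(t)=0$ for $t\ge 1$) to recover $\lambda^{-3}F$, and the change of variables giving $M_2^2\lambda^{-5}L$. The only cosmetic differences are that you evaluate $\int R\,dz=M_1$ via a Fourier/second-moment expansion rather than Newton's theorem with the explicit uniform-ball potential, and you apply Cauchy--Schwarz once and then Minkowski, where the paper splits the factor $\sqrt{\rho(y)}+\sqrt{\rho(x)}$ first and applies Cauchy--Schwarz to each piece separately.
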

\begin{proof}
We write
\begin{eqnarray} \label{tobeestimated}
& \int_{\R^3 \times \R^3} dy dx  \rho(y) \lambda \rho(x)^{4/3} R(\lambda \rho(x)^{1/3}(x-y))
=  \int_{\R^3 \times \R^3} dy dx  \lambda \rho(x)^{7/3} R(\lambda \rho(x)^{1/3}(x-y))  \nonumber \\
&+
  \int_{\R^3 \times \R^3} dy dx  [\rho(y)^{1/2} - \rho(x)^{1/2}]  [\rho(y)^{1/2} + \rho(x)^{1/2}] \lambda \rho(x)^{4/3} R(\lambda \rho(x)^{1/3}(x-y))
 \end{eqnarray}
 The first term on the right side can be computed and yields
 \begin{equation}
\int_{\R^3 \times \R^3} dy dx  \lambda \rho(x)^{7/3} R(\lambda \rho(x)^{1/3}(x-y))
=\frac{1}{\lambda^2} \int_{\R^3} \rho(x)^{4/3} dx \int_{\R^3} R(|z|) dz
\end{equation} 
Now
$$
 \int_{\R^3} R(|z|) dz = 2 \pi  - \int_{|z| < 1} \phi(z) dz = 2\pi -  \int_{\R^3} \Phi(z) \mu(z) dz 
$$
where $\Phi$ is the potential of a uniform charge supported in the unit ball with total charge $4\pi/3$.
This function can be readily computed to be
$$
\Phi(z) = \begin{cases}  \frac{4 \pi}{3}  + \frac{4 \pi}{6} (1-|z|^2)  & {\rm if}\  |z| <1 \\   \frac{4 \pi}{3} \frac{1}{|z|} &{\rm  if} \ |z| \ge 1, \end{cases}
$$
and we get
$$
 \int_{\R^3} R(|z|) dz = \frac{2 \pi }{3}\int_{|z|<1} |z|^2 \mu(z) dz \ .
$$
Thus, we have that
\begin{equation} \label{basic}
\int_{\R^3 \times \R^3} dy dx  \lambda \rho(x)^{7/3} R(\lambda \rho(x)^{1/3}(x-y)) = 
 \frac{1}{\lambda^2} \int_{\R^3} \rho(x)^{4/3} dx   \frac{2 \pi }{3}\int_{|z|<1} |z|^2 \mu(z) dz
\end{equation}
The second term in (\ref{tobeestimated}) reads
\begin{eqnarray} \label{twoterms}
 \int_{\R^3 \times \R^3} dy dx  [\rho(y)^{1/2} - \rho(x)^{1/2}]  \rho(y)^{1/2} \lambda \rho(x)^{4/3} R(\lambda \rho(x)^{1/3}(x-y)) \nonumber \\ 
+  \int_{\R^3 \times \R^3} dy dx  [\rho(y)^{1/2} - \rho(x)^{1/2}]  \rho(x)^{1/2} \lambda \rho(x)^{4/3} R(\lambda \rho(x)^{1/3}(x-y))
\end{eqnarray}
 Further,
\begin{eqnarray}
& \int_{\R^3 \times \R^3 } dy dx  [\rho(y)^{1/2} - \rho(x)^{1/2}]  \rho(y)^{1/2} \lambda \rho(x)^{4/3} R(\lambda \rho(x)^{1/3}(x-y)) \nonumber \\ 
&=  \int_{\R^3 \times \R^3} dy dx \frac{  [\rho(y)^{1/2} - \rho(x)^{1/2}] }{|x-y|^2}  \rho(y)^{1/2}|x-y|^2  \lambda \rho(x)^{4/3} R(\lambda \rho(x)^{1/3}(x-y))
\end{eqnarray}
and by Schwarz's inequality this is bounded above by
\begin{eqnarray}
&  \left[ \int_{\R^3 \times \R^3} dy dx \frac{  [\rho(y)^{1/2} - \rho(x)^{1/2}] ^2}{|x-y|^4}  
\right]^{1/2}  \nonumber \\
& \times    \left[ \int_{\R^3 \times \R^3} dy dx  \rho(y) \lambda^2  \rho(x)^{8/3} |x-y|^4R(\lambda \rho(x)^{1/3}(x-y))^2 \right]^{1/2}
\end{eqnarray}
Here we note that the function $R(t)=0$ for $t >1$ and hence we can assume that the domain of integration is such
that $\lambda \rho(x)^{1/3} |x-y| < 1$. Moreover,
$$
R(t) < \frac{1}{t} \ .
$$
Thus, the second factor is bounded by
$$
\frac{1}{\lambda^{3/2}}\left[\int_{\R^3 \times \R^3} dy dx  \rho(y)  \lambda \rho(x)^{4/3} R(\lambda \rho(x)^{1/3}(x-y)) \right]^{1/2}
$$
i.e., 
\begin{eqnarray}
& \int_{\R^3 \times \R^3} dy dx  [\rho(y)^{1/2} - \rho(x)^{1/2}]  \rho(y)^{1/2} \lambda \rho(x)^{4/3} R(\lambda \rho(x)^{1/3}(x-y)) \nonumber \\ 
& \le   \frac{1}{\lambda^{3/2}} \left[ \int_{\R^3 \times \R^3} dy dx \frac{  [\rho(y)^{1/2} - \rho(x)^{1/2}] ^2}{|x-y|^4}  \right]^{1/2}
\left[\int_{\R^3 \times \R^3} dy dx  \rho(y) \lambda   \rho(x)^{4/3} R(\lambda \rho(x)^{1/3}(x-y)) \right]^{1/2}
\end{eqnarray}
We estimate the  second term in (\ref{twoterms}) in a similar fashion. Schwarz's inequality leads to
\begin{eqnarray}
&\int_{\R^3 \times \R^3} dy dx  [\rho(y)^{1/2} - \rho(x)^{1/2}]  \rho(x)^{1/2} \lambda \rho(x)^{4/3} R(\lambda \rho(x)^{1/3}(x-y)) \nonumber \\
& \le \left[ \int_{\R^3 \times \R^3} dy dx \frac{  [\rho(y)^{1/2} - \rho(x)^{1/2}] ^2}{|x-y|^4}  
\right]^{1/2}  \nonumber \\
& \times \left[ \int_{\R^3 \times \R^3} dy dx  |x-y|^4  \lambda^2 \rho(x)^{11/3} R(\lambda \rho(x)^{1/3}(x-y))^2 \right]^{1/2} \ .
\end{eqnarray}
Changing variables $ y \to z = \lambda \rho(x)^{1/3}(x-y)$ yields for the second factor
\begin{equation}
\lambda^{-5/2} \left[\int_{\R^3} \rho(x)^{4/3} dx \int_{\R^3} |z|^4 R(z)^2 d z\right]^{1/2} \ .
\end{equation}
\end{proof}
\begin{proof}[Proof of Theorem \ref{teorema}]
For any $\alpha > 0$ we have that
$$
\frac{1}{\lambda^{3/2}} K^{1/2} F^{1/2}  \le \frac{K}{4 \alpha \lambda^3} +\alpha F,
$$
and hence,
$$
F \le \frac{M_1}{(1-\alpha) \lambda^2} L+ \frac{M_2}{(1-\alpha) \lambda^{5/2}} K^{1/2} L^{1/2} +\frac{K}{4(1-\alpha) \alpha \lambda^3} \ .
$$
Thus, the absolute value of the indirect term is bounded above by
\begin{equation}
|E|  \le  \left[\frac{1}{(1-\alpha) \lambda^2}  M_1  + \lambda D(\mu,\mu)\right]
L + \frac{M_2}{(1-\alpha) \lambda^{5/2}} K^{1/2} L^{1/2} +\frac{1}{4(1-\alpha) \alpha \lambda^3}K \ .
\end{equation}
Now we optimize the {\it first} term with respect to $\lambda$ and obtain the upper bound
\begin{equation}
\frac{3M_1^{1/3} D(\mu,\mu)^{2/3}}{2^{2/3} (1-\alpha)^{1/3}}    L 
+  \frac{1}{(1-\alpha)^{1/6} } \frac{M_2 D(\mu,\mu)^{5/6}} {(2M_1)^{5/6}}  K^{1/2} L^{1/2}
+\frac{1}{4 \alpha} \left( \frac{D(\mu,\mu)} {2M_1}\right) K \ .
\end{equation}
If we optimize the expression
$$
D(\mu,\mu)^{2/3}M_1^{1/3} = D(\mu,\mu)^{2/3} \left( \frac{2 \pi }{3}\int_{|z|<1} |z|^2 \mu(z) dz\right)^{1/3}
$$
with respect to the function $\mu$ we find that the optimizing $\mu$ is the uniform distribution on the unit ball.
Simple calculations yield for this distribution
$$
D(\mu,\mu) = \frac{3}{5},  \qquad \frac{2 \pi }{3}\int_{|z|<1} |z|^2 \mu(z) dz = \frac{2 \pi}{5} \ ,
$$
and moreover $M_2 = \sqrt{{23\pi}/{2310}}$.
Thus,
\begin{equation} E \ge - \frac{9}{10} \left(\frac{4\pi}{3(1-\alpha)}\right)^{1/3} L 
-  \frac{1} {(1-\alpha)^{1/6} } (\frac{3}{4\pi})^{5/6}  \sqrt{\frac{23\pi}{2310}}K^{1/2} L ^{1/2} - \frac{3}{16 \pi \alpha} K  \ .
\end{equation}
\end{proof}

\section{Appendix}

In this Appendix we illustrate how for some states, the first term in (\ref{exch}) can be 
much larger than the second. In our first example we will consider a single particle density given by an exponential, normalized to $N$ (the number of particles). We show that in this case, the quotient between the first and the last term goes like $1/N^{1/3}$, so that in the large particle number limit, the last two terms in (\ref{exch}) are negligible. 

A function $f: \mathbb{R}^3 \to \mathbb{C}$ is said to be in $H^{1/2}(\mathbb{R}^3)$ if $f \in L^2(\mathbb{R}^3)$ and $\left|k\right|^{1/2}\widehat{f}(k)$ is in $L^2(\mathbb{R}^3)$. For functions in this set, the following identity holds,
\begin{equation}
K(f) \equiv \frac{1}{2\pi^2}\iint\frac{\left|f(x) - f(y)\right|^2}{\left|x - y\right|^4}\,dx\,dy = \int\lvert 2\pi k \rvert\lvert\widehat{f}(k)\rvert^2\,dk,
\end{equation}
(see, e.g., \cite{LiLo01}, Section 7.12, equation (4), p. 184).

It suffices to prove that the quotient
\begin{equation}
\frac{K(\sqrt{\rho})}{\int\rho^{4/3}\,dx} \equiv Q(\rho)
\end{equation}
can be made arbitrarilly small. To this end, we consider the function,
\begin{equation}
\rho(x) = a e^{-b\left|x\right|}.
\end{equation}
Since $\int\rho(x)\,dx = N$,
\begin{equation}
b = \left(\frac{8\pi a}{N}\right)^{1/3}.
\end{equation}
Certainly $\sqrt{\rho} \in L^2(\mathbb{R}^3)$, and
\begin{equation}
\widehat{\sqrt{\rho}}(k) = \frac{32\pi a^{1/2}b}{(\left|4\pi k\right|^2 + b^2)^2},
\end{equation}
so $\sqrt{\rho} \in H^{1/2}(\mathbb{R}^3)$. Therefore,
\begin{equation}
K(\sqrt{\rho}) = \frac{8a}{3b^2}.
\end{equation}
On the other hand,
\begin{equation}
\int\rho^{4/3}\,dx = \frac{27a^{4/3}\pi}{8b^3}.
\end{equation}
So we conclude that
\begin{equation}
Q(\rho) = \frac{2^{7}}{3^4 \pi^{2/3} N^{1/3}},
\end{equation}
and taking the limit $N \to \infty$, the desired result is obtained.

As a second example, which is more relevant from the physical point of view, 
we use the Thomas--Fermi density of an atom, $\rho_{TF}$. It is well known 
that the Thomas--Fermi density of a neutral  atom of nuclear charge $Z$ satisfies the following scaling, 
\begin{equation}
\rho_{TF}(x)=Z^2f(Z^{1/3} x),
\label{eq:A1}
\end{equation}
where $f$ denotes the Thomas--Fermi density in the case $Z=1$. Using this simple scaling relation, one can immediately check that 
$$
\int \rho_{TF}^4/3 \, dx = c Z^{5/3}, 
$$
for some positive constant $c$, independent of $Z$. On the other hand, also using this simple scaling relation one sees that
$$
K(\sqrt{\rho_{TF}}) = d \, Z^{4/3},
$$
for some positive constant $d$ (independent of $Z$). Thus, again we observe the same dependence in the number of particles as in the previous example, i.e., 
$$
Q(\rho_{TF}) = \frac{\tilde c}{Z^{1/3}}.
$$
Thus, for large values of $Z$, the second and third terms in (\ref{exch}) are negligible.

\end{document}